\documentclass[conference]{IEEEtran}
\usepackage{diagbox} % 加载宏包
\usepackage{graphicx}
\usepackage{epstopdf}
\usepackage{psfrag}
\usepackage{subfigure}
\usepackage{url}
\usepackage{stfloats}
\usepackage{amsfonts,amssymb,amsmath,bm,paralist,cite,ifthen,color}
\usepackage{array}
\usepackage{caption}
\usepackage{calc}
\usepackage{enumerate}
\usepackage{algorithmic,algorithm}
\usepackage{array}
\usepackage{float}
\usepackage[colorlinks,
            linkcolor=black,
            anchorcolor=black,
            citecolor=black]{hyperref}

\newtheorem{theorem}{Theorem}

\newtheorem{proof}{Proof}

\newcommand{\qedsymbol}{\hfill \(\blacksquare\)}
\graphicspath{{Figures/}}

%[section]

\newcommand\Cc{\ensuremath{\mathcal{C}}}
\newcommand\Nc{\ensuremath{\mathcal{N}}}
\newcommand\Oc{\ensuremath{\mathcal{O}}}

\newcommand\xb{\ensuremath{{\bm x}}}
\newcommand\ab{\ensuremath{{\bm a}}}

\newcommand\nb{\ensuremath{{\bf n}}}

\newcommand\yb{\ensuremath{{\bm y}}}
\newcommand\ssb{\ensuremath{{\bm s}}}

\newcommand\Hb{\ensuremath{{\bm H}}}

\newcommand\Bb{\ensuremath{{\bm B}}}
\newcommand\Ab{\ensuremath{{\bm A}}}
\newcommand\Cb{\ensuremath{{\bm C}}}

\newcommand\Db{\ensuremath{{\bm D}}}

\newcommand\Fb{\ensuremath{{\bm F}}}

\newcommand\Ib{\ensuremath{{\bm I}}}

\newcommand\Tb{\ensuremath{{\bm T}}}

\newcommand\Ub{\ensuremath{{\bm U}}}

\newcommand\Qb{\ensuremath{{\bm Q}}}
\newcommand\Vb{\ensuremath{{\bm V}}}
\newcommand\vb{\ensuremath{{\bm v}}}
\newcommand\Wb{\ensuremath{{\bm W}}}
\newcommand\Jb{\ensuremath{{\bm J}}}

\newcommand\Gammab{\ensuremath{{\bm \Gamma}}}
\newcommand\Lambdab{\ensuremath{{\bm \Lambda}}}

\newcommand\Cs{\ensuremath{{\mathbb{C}}}}
\newcommand\Es{\ensuremath{{\mathbb{E}}}}

\hyphenation{op-tical net-works semi-conduc-tor}
\IEEEoverridecommandlockouts

\begin{document}

% paper title
% Titles are generally capitalized except for words such as a, an, and, as,
% at, but, by, for, in, nor, of, on, or, the, to and up, which are usually
% not capitalized unless they are the first or last word of the title.
% Linebreaks \\ can be used within to get better formatting as desired.
% Do not put math or special symbols in the title.
\title{ Closed-Form Hybrid Beamforming Solution for Spectral Efficiency Upper Bound Maximization in mmWave MIMO-OFDM Systems 
\thanks{This work has been supported in part by Academy of Finland under 6Genesis Flagship (grant 318927) and EERA Project (grant 332362).}
}

% author names and affiliations
% use a multiple column layout for up to three different
% affiliations
\author{\IEEEauthorblockN{Mengyuan Ma, Nhan Thanh Nguyen and Markku Juntti}
\IEEEauthorblockA{\textit{Centre for Wireless Communications (CWC), Uninvesity of Oulu, P.O.Box 4500, FI-90014, Finland}\\
Email: \{mengyuan.ma, nhan.nguyen, markku.juntti\}@oulu.fi}
% \and
% \IEEEauthorblockN{Nhan Thanh Nguyen}
% \IEEEauthorblockA{Oulu University, Finland\\
% Email: nhan.nguyen@oulu.fi}
% \and
% \IEEEauthorblockN{Markku Juntti}
% \IEEEauthorblockA{Oulu University, Finland\\
% Email: markku.juntti@oulu.fi}
}
% 

% make the title area
\maketitle

% As a general rule, do not put math, special symbols or citations
% in the abstract
\begin{abstract}
Hybrid beamforming is considered a key enabler to realize millimeter wave (mmWave) multiple-input multiple-output (MIMO) communications due to its capability of considerably reducing the number of costly and power-hungry radio frequency chains in the transceiver. However, in mmWave MIMO orthogonal frequency-division multiplexing (MIMO-OFDM) systems, hybrid beamforming design is challenging because the analog precoder and combiner are required to be shared across the whole employed bandwidth. In this paper, we propose closed-form solutions to the problem of designing the analog precoder/combiner in a mmWave MIMO-OFDM system by maximizing the upper bound of the spectral efficiency. The closed-form solutions facilitate the design of analog beamformers while guaranteeing state-of-art performance. Numerical results show that the proposed algorithm attains a slightly improved performance with much lower computational complexity compared to the considered benchmarks.
\end{abstract}

\begin{IEEEkeywords}
Hybrid beamforming, millimeter wave, MIMO-OFDM, spectral efficiency, computational efficiency.
\end{IEEEkeywords}

% For peer review papers, you can put extra information on the cover
% page as needed:
% \ifCLASSOPTIONpeerreview
% \begin{center} \bfseries EDICS Category: 3-BBND \end{center}
% \fi
%
% For peerreview papers, this IEEEtran command inserts a page break and
% creates the second title. It will be ignored for other modes.
\IEEEpeerreviewmaketitle

\section{Introduction}
Millimeter wave (mmWave) communication system, with its large utilizable spectrum, is promising to meet the increasing demand for the data rate of current and future wireless networks \cite{swindlehurst2014millimeter,aldubaikhy2020mmwave}. Hybrid beamforming (HBF) for mmWave multiple-input multiple-output (MIMO) communications has attracted great interest in recent years since it guarantees the spatial multiplexing gain with the requirement of reduced power-hungry radio frequency (RF) chains \cite{ahmed2018survey}. 

% Owing to the large bandwidths available in mmWave systems, researchers start to explore frequency-selective broadband channels, considering, particularly, orthogonal frequency-division multiplexing (OFDM) systems. Nevertheless, while the OFDM enables the broadband channel to be decomposed into multiple non-interfering frequency-flat narrowband channels, the analog beamformer is required to be shared among the whole bandwidth. Thus, the design of HBF in OFDM is even more challenging than that in narrowband systems \cite{yu2016alternating}. 
 However, the design of  HBF in orthogonal frequency-division multiplexing (OFDM) is considerably challenging due to the analog beamformer is required to be shared among the whole bandwidth. This motivates various HBF designs for mmWave MIMO-OFDM systems \cite{yu2016alternating,sohrabi2017hybrid,tsai2018sub,zilli2021constrained,ku2021low}. Specifically, Yu \textit{et al.} in \cite{yu2016alternating} propose a beamforming design based on minimizing the total norm distances between the HBF and optimal unconstrained beamforming matrices for all the subcarriers. To solve the problem, they propose an iterative algorithm by an alternating optimization approach. In \cite{sohrabi2017hybrid}, the analog beamformers are designed by maximizing the upper bound of spectral efficiency (SE). The solution is obtained by an iterative optimization method. In \cite{tsai2018sub}, Tsai \textit{et al.} propose a low-complexity procedure in which the analog precoder and combiner are designed by respectively extracting the phases of the averaged channel covariance matrix and the averaged conjugate-transposed channel covariance matrix. In \cite{zilli2021constrained}, Zilli \textit{et al.} transform the analog beamformer design into a Tucker2 tensor decomposition problem, allowing to find the solution by the projected alternate least-square-based algorithm. In \cite{ku2021low}, Ku \textit{et al.} propose a reduced-complexity algorithm in which the sub-optimization problems of \cite{sohrabi2017hybrid} are iteratively solved to bypass the matrix-inversion operation.

 In most of the aforementioned works, the analog beamformers are obtained by iterative optimization approaches. Such designs entail excessively high computational burdens of the transceiver, making HBF less efficient, especially in massive MIMO systems \cite{zhang2019hybrid}. To overcome this challenge, in this work, we propose a novel scheme for HBF design and optimization. Specifically, we focus on the problem of maximizing the upper bound of the system SE. The closed-form solutions to the analog beamformers are found efficiently by partial singular value decomposition (SVD). The proposed scheme is computationally efficient and feasible for practical implementation because it does not perform any iterative procedure. Nevertheless, it still achieves slight performance improvement compared with state-of-art HBF algorithms.

%The rest of paper is organized as follows. Section II introduces the model of wideband mmWave MIMO-OFDM system. Section III presents the proposed wideband hybrid beamforming approach. Section IV shows simulation and complexity analysis results. Finally, Section V concludes this paper.

\section{MmWave MIMO-OFDM System}

%   \begin{figure}[htbp]
%        \centering
%        \subfigure[Phase shifter based hybrid combining.]{\label{fig: PS HBF} \includegraphics[width=0.7\textwidth]{system.JPG}}
%        \subfigure[Switch based hybrid combining]{\label{fig: SW HBF} \includegraphics[width=0.7\textwidth]{SW_HBF.png}}
%        	\caption{Illustration of hybrid combining structure.}
%        	\label{fig:Illustration of hybrid combining structure}
%        	\vspace{-0.3cm}
%  \end{figure}

\subsection{Signal Model}
   \begin{figure}[htbp]
        \centering	
         \includegraphics[width=0.5\textwidth]{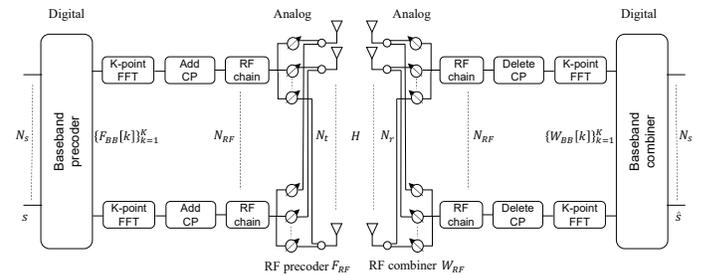}
         \captionsetup{font={small}}
        \caption{HBF transceiver structure in a mmWave MIMO-OFDM system.}

        	\label{fig:HBF transceiver structure in mmWave MIMO-OFDM system}
        	\vspace{-0.3cm}
  \end{figure}
We consider a MIMO-OFDM system where the transmitter and receiver are equipped with $N_t$ and $N_r$ antennas, respectively. Without loss of generality, we assume that transmitter and receiver are deployed with the same number of RF chains, denoted by $N_{RF}$ ($N_{RF}\leq \min(N_t,N_r)$). Let $K$ be the number of subcarriers, and $\ssb_k \in \Cs^{N_s \times 1}$ $(N_s\leq N_{RF})$ be the transmitted symbol vector at the $k$th subcarrier, $\Es\left[\ssb_k\ssb^H_k\right]=\Ib_{N_s},\; k=1,\cdots,K$, where $\Ib_{N_s}$ denotes the $N_s\times N_s$ identity matrix. As illustrated in Fig. \ref{fig:HBF transceiver structure in mmWave MIMO-OFDM system}, the transmitted vector $\ssb_k$ is first precoded by the low-dimensional baseband precoder $\Fb_{BB}[k] \in \Cs ^{N_{RF} \times N_s}$ and then transformed into time domain by  $N_{RF}$ $K$-point inverse fast Fourier transforms (IFFTs). The signal vector in time domain is then added the cyclic prefix (CP) and precoded by the common high-dimensional analog precoder, denoted by $\Fb_{RF}\in \Cs ^{N_t \times N_{RF}}$ satisfying $|\Fb_{RF}(i,j)|=1/\sqrt{N_t}, \forall i,j$, for all subcarriers. Here, $\Ab(i,j)$ denotes the entry on the $i$th row and $j$th colum of matrix  $\Ab$, and $|x|$ represents the modulus of the complex number $x$. The final transmitted signal for each subcarrier is given as
\begin{equation}\label{eq:transmitted signal}
  \xb_k=\Fb_k\ssb_k
\end{equation}
where $\Fb_k=\Fb_{RF}\Fb_{BB}[k]$ and $\| \Fb_k \|^2_F \leq P_b$, with $P_b$ being the power budget for each subcarrier.

At the receiver, the received signal vector is first combined by the common analog combiner, represented by $\Wb_{RF}\in \Cs ^{N_r \times N_{RF}}$,  $|\Wb_{RF}(i,j)|=1/\sqrt{N_r}, \forall i,j$. After discarding the CP and performing $N_{RF}$ $K$-point fast Fourier transforms (FFTs), the combined signal is further processed at frequency domain by low-dimensional baseband combiner $\Wb_{BB}[k] \in \Cs ^{N_{RF} \times N_s}$ for each subcarrier. Finally, the received signal at the $k$th subcarrier through channel $\Hb_k\in \Cs^{N_r\times N_t}$ is given as
\begin{equation}\label{eq:received signal}
  \yb_k=\Wb^H_k\Hb_k \Fb_k\ssb_k + \Wb^H_k\nb_k
\end{equation}
where $\Wb_k=\Wb_{RF}\Wb_{BB}[k]$, and $\nb_k\sim \Nc(\boldsymbol{0}, \sigma_n^2\Ib_{N_r})$ is the additive white Gaussian noise vector at the $k$th subcarrier with $\sigma_n^2$ being the noise variance.

\subsection{Channel Model}\label{sc:channel model}
The mmWave channel can be characterized by the Saleh-Valenzuela model \cite{rappaport2015millimeter}. For OFDM systems, the frequency-selective channel is divided into $K$ frequency-flat sub-channels with non-interfering subcarriers. In frequency domain, the channel matrix at the $k$th subcarrier can be given by \cite{sohrabi2017hybrid,zilli2021constrained}
\begin{equation}\label{eq:channel model}
  \Hb_k=\sqrt{\frac{N_rN_t}{N_{cl} N_{ray}}}\sum_{c=0}^{N_{cl}-1}\sum_{\ell=1}^{N_{ray}}\alpha_{c,\ell} \ab_r(\theta^r_{c,\ell})\ab_t^H(\theta^t_{c,\ell})e^{\frac{-j2\pi c(k-1)}{K}},
\end{equation}
where $\alpha_{c,\ell} \sim \Cc\Nc(0,1)$, $\theta^r_{c,\ell}$, and $\theta^t_{c,\ell}$ are the path gain, angles of arrival and departure of the $\ell$th path in the $c$th cluster, respectively. Furthermore, $\theta^r_{c,\ell}$ and $\theta^t_{c,\ell}$ follow a Laplacian distribution with mean angles $\theta^r_{c}$, $\theta^t_{c}$ uniformly distributed over $[0,2\pi)$, and angular spread of $\sigma_r$ and $\sigma_t$ \cite{el2014spatially}. $\ab_r(\cdot)$ and $\ab_t(\cdot)$ are array response vectors at the receiver and transmitter, respectively. For simplicity, we assume that the numbers of scattering clusters and propagation paths, and the angular spreads are the same for all subcarriers. Furthermore, we assume the uniform linear array (ULA) is employed. The steering vector of ULA with $N$ antennas and spacing distance $d$ of adjacent antenna is modeled as \cite{sayeed2002deconstructing}
\begin{equation}\label{eq:steering vector}
  \ab(\theta)=\frac{1}{\sqrt{N}}\left[1,e^{-j\frac{2\pi}{\lambda}d\sin(\theta)},\cdots,e^{-j\frac{2\pi}{\lambda}(N-1)d\sin(\theta)}\right]^T,
\end{equation}
where $\lambda$ is the signal wavelength. Note that in this representation, the operating frequency is assume to be much larger than the total bandwidth of the OFDM system such that signal wavelength in all subcarriers is approximately the same.

\subsection{Problem Formulation}
 In this work, we focus on the problem of HBF design with the assumption that the full channel state information (CSI) is available at both the transmitter and receiver \cite{yu2016alternating,sohrabi2017hybrid,zilli2021constrained,ku2021low}. We aim at designing transmit and receive HBF matrices that maximize the overall SE of the system under power constraints for subcarriers and constant-modulus constraints for analog beamformers. The problem is expressed as
\begin{subequations}\label{eq:problem formulation}
  \begin{align}
    \underset{\Fb_{RF},\Wb_{RF},\atop \left\{ \Fb_{BB}[k], \Wb_{BB}[k]\right\}_{k=1}^{K}}{\max} \quad  & \frac{1}{K} \sum_{k=1}^{K} R_k\\
    \textrm{s.t.} \qquad  & \| \Fb[k] \|^2_F \leq P_b ,\\
    &\left|\Fb_{RF}(i,j)\right|=1/\sqrt{N_t}, \quad \forall i, j, \label{eq: analog precoder constraint}\\
    &\left|\Wb_{RF}(i,j)\right|=1/\sqrt{N_r}, \quad \forall i, j, \label{eq: analog combiner constraint}
\end{align}
\end{subequations}
%  \begin{align*}
%    \max\limits_{\Fb_{RF},\Wb_{RF},\atop \left\{ \Fb_{BB}[k], \Wb_{BB}[k]\right\}_{k=1}^{K} } & \frac{1}{K} \sum_{k=1}^{K} R_k \nthis\\
%    {\operatorname{subject} \text { to }}  & \nonumber \\
%    & \| \Fb[k] \|^2_F \leq P_b \\
%    &\left|\Fb_{RF}(i,j)\right|=1/\sqrt{N_t}, \quad \forall i, j \nthis\\
%    &\left|\Wb_{RF}(i,j)\right|=1/\sqrt{N_r}, \quad \forall i, j \nthis
%\end{align*}
where $R_k$ denotes the SE of the $k$th subcarrier. Considering the Gaussian signalling, $R_k$ is expressed as
\begin{equation}\label{eq:SE for each subcarrier}
  R_k=\log_{2}\left|\Ib_{N_{s}}+\frac{1}{\sigma_n^{2}} \Wb_k^{\dagger}  \Hb_k \Fb_k \Fb^H_k \Hb^H_k \Wb_k \right|,
\end{equation}
where $\dagger$ denotes the Moore-Penrose pseudo inversion, $\Fb_k=\Fb_{RF}\Fb_{BB}[k]$ and $\Wb_k=\Wb_{RF}\Wb_{BB}[k]$.

\section{Hybrid beamforming design}
The SE maximization problem in (\ref{eq:problem formulation}) requires a joint optimization over both the transmitter and receiver beamformers, i.e., $\left\{ \Fb_k \right\}_{k=1}^{K}$ and $\left\{ \Wb_k \right\}_{k=1}^{K}$. It is non-convex due to the unit-modulus constraints \eqref{eq: analog precoder constraint} and \eqref{eq: analog combiner constraint} of the analog beamformers. Therefore, it is challenging to obtain the optimal solution \cite{sohrabi2017hybrid}. To tackle this challenge, we decouple the problem into two subproblems. Specifically, for the design of transmitter beamformers $\left\{ \Fb_k \right\}_{k=1}^{K}$, we assume that the optimal receiver is used. Then, the receiver beamformers $\left\{ \Wb_k \right\}_{k=1}^{K}$ are obtained given the transmitter beamformers $\left\{ \Fb_k \right\}_{k=1}^{K}$. The solutions to these subproblems are presented in the following subsections.

\subsection{Transmitter HBF Design}
Given the receive beamforming matrices $\left\{ \Wb_k \right\}_{k=1}^{K}$, the hybrid beamforming design problem at the transmitter is expressed as
\begin{subequations}\label{eq:problem formulation at transmitter}
  \begin{align}
        \max\limits_{\Fb_{RF}, \left\{ \Fb_{BB}[k]\right\}_{k=1}^{K} } & \frac{1}{K} \sum_{k=1}^{K} \tilde{R}_k\\
        \text { s.t. } \qquad & \| \Fb_{RF}\Fb_{BB}[k] \|^2_F \leq P_b, \\
        &\left|\Fb_{RF}(i,j)\right|=1/\sqrt{N_t}, \quad \forall i, j, 
    \end{align}
\end{subequations}
where $\tilde{R}_k$ is given as
\begin{equation}\label{eq:SE for transmitter}
  \tilde{R}_k=\log_{2}\left|\Ib+\frac{1}{\sigma_n^{2}} \Hb_k \Fb_{RF}\Fb_{BB}[k] \Fb_{BB}^H[k]\Fb_{RF}^H \Hb^H_k  \right|.
\end{equation}
When $\Fb_{RF}$ is fixed, the optimal digital precoder is given as
\begin{equation}\label{eq:digital precoder}
  \Fb_{BB}[k]=(\Fb_{RF}^H\Fb_{RF})^{-\frac{1}{2}}\Vb_k \Gammab^{\frac{1}{2}}_k,
\end{equation}
where $\Vb_k \in \Cs^{N_{RF}\times N_{s}}$ is the set of right singular vectors corresponding to the $N_s$ largest singular values of $\Qb_k \triangleq \Hb_k\Fb_{RF}(\Fb_{RF}^H\Fb_{RF})^{-\frac{1}{2}}$, and $\Gammab_k$ is the diagonal matrix with its diagonal elements being the allocated power to each symbol of the subcarrier $k$.

Since it is challenging to design the analog beamforming matrix $\Fb_{RF}$ that is shared among all the subcarriers, we alternatively tend to find the solution to the upper bound of the objective function in problem (\ref{eq:problem formulation at transmitter}). For moderate and high signal-to-noise-ratio (SNR) regime, it is asymptotically optimal to adopt the equal power allocation for all streams in each subcarrier \cite{lee2008downlink}. In this regime, we have $\Gammab_k=\gamma\Ib_{N_s}$ and $\gamma=P_b/N_s$. Substituting (\ref{eq:digital precoder}) into (\ref{eq:SE for transmitter}), we obtain
\begin{align}\label{eq: transmitter SE relaxtion}
  \tilde{R}_k \overset{a}= & \log_{2}\left|\Ib+\frac{\gamma}{\sigma_n^2} \tilde{\Vb}_k^H\Qb_k^H \Qb_k \tilde{\Vb}_k \tilde{\Ib}_{N_{RF}}  \right| \notag \\
  \overset{b}\leq & \log_{2}\left| \Ib + \frac{\gamma}{\sigma_n^2} \tilde{\Vb}_k^H\Qb_k^H \Qb_k \tilde{\Vb}_k \Ib_{N_{RF}} \right| \notag\\
  =& \log_{2}\left| \Ib + \frac{\gamma}{\sigma_n^2}\Qb_k^H \Qb_k  \right|,
\end{align}
where the equality in (a) is obtained based on
\begin{equation}\label{eq:unitary matrix equality}
  \Vb_k\Vb_k^H=\tilde{\Vb}_k \tilde{\Ib}_{N_{RF}} \tilde{\Vb}_k ^H
\end{equation}
with $\tilde{\Vb}_k \in \Cs^{N_{RF}\times N_{RF}}$ being a unitary matrix and  $\tilde{\Ib}_{N_{RF}}=\left[\begin{array}{rr}
\Ib_{N_{s}} & \mathbf{0} \\
\mathbf{0} & \mathbf{0}
\end{array}\right]$ being a $N_{RF}\times N_{RF}$ matrix. Furthermore, the equation in (b) is because $\tilde{\Vb}_k^H\Qb_k^H \Qb_k \tilde{\Vb}_k $ is a semi-definite positive matrix. Note that the relaxation of the inequality (b) is tight when $N_s=N_{RF}$ \cite{sohrabi2017hybrid}.

Since $\Fb_{RF}^H \Fb_{RF} \rightarrow \Ib_{N_{RF}}$ when the array size is very large \cite{el2014spatially}, it is a reasonable assumption that the analog precoder $\Fb_{RF}$ is a matrix of full-column rank. In this case, the independence between different analog beamformers for different RF chains endows the largest optimization space for finding the optimal digital beamformer. Based on the QR decomposition, we can express $\Fb_{RF}=\Vb_{RF}\Bb$, where $\Vb_{RF} \in \Cs^{N_{t}\times N_{RF}}$ has orthogonal column vectors, i.e., $\Vb_{RF}^H\Vb_{RF}=\Ib_{N_{RF}}$, and $\Bb$ is an invertible matrix. The upper bound of the objective function in (\ref{eq:problem formulation at transmitter}) is
\begin{align}\label{eq: transmitter SE upper bound}
  \frac{1}{K} \sum_{k=1}^{K}\tilde{R}_k \leq &\frac{1}{K} \sum_{k=1}^{K} \log_{2}\left| \Ib + \frac{\gamma}{\sigma_n^2}\Qb_k^H \Qb_k \right| \notag \\
  = & \frac{1}{K} \sum_{k=1}^{K} \log_{2}\left| \Ib + \frac{\gamma}{\sigma_n^2} \Vb_{RF}^H \Hb_k^H \Hb_k \Vb_{RF}\right| \notag\\
   \overset{c}\leq& \log_{2}\left| \Ib + \frac{\gamma}{\sigma_n^2}\Vb_{RF}^H \Hb_e \Vb_{RF} \right|,
\end{align}
where $\Hb_e\triangleq \frac{1}{K} \sum_{k=1}^{K}\Hb_k^H \Hb_k$, and the equation in (c) is based on Jensen's inequality. In summary, the analog precoder $\Fb_{RF}$ can be obtained by solving the following problem
\begin{align}\label{eq:analog precoder problem}
  \max\limits_{\Fb_{RF}} &\quad \log_{2}\left| \Ib + \frac{\gamma}{\sigma_n^2}\Vb_{RF}^H \Hb_e \Vb_{RF} \right| \\
   \text { s.t. } & \left|\Fb_{RF}(i,j)\right|=1/\sqrt{N_t}, \quad \forall i, j. \notag
\end{align}
The unconstrained solution to (\ref{eq:analog precoder problem}), denoted by $\Fb_{RF}^*$, is given in the following theorem.
 \begin{theorem}\label{tr:optimal solution}
       Let $\Hb_e=\Vb_e\Lambdab_e\Vb_e^H$, where $\Vb_e=[\vb_{e,1},\cdots,\vb_{e,N_t}]$ is a unitary matrix, and $\Lambdab_e$ is a diagonal matrix with diagonal elements being the eigenvalues of $\Hb_e$ in a decreasing order. The unconstrained solution to (\ref{eq:analog precoder problem}) is given as
       \begin{equation}\label{eq:optiaml solution without hardware constraints}
         \Fb_{RF}^*=\Vb_{RF}^*\Cb,
       \end{equation}
       where $\Vb_{RF}^*=[\vb_{e,1},\cdots,\vb_{e,N_{RF}}]$ and $\Cb\in \Cs^{N_{RF}\times N_{RF}}$ is an arbitrary invertible matrix.
 \end{theorem}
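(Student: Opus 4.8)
The plan is to exploit that the cost in (\ref{eq:analog precoder problem}) depends on $\Fb_{RF}$ only through the orthonormal factor $\Vb_{RF}$ of its QR decomposition $\Fb_{RF}=\Vb_{RF}\Bb$, so that the invertible factor $\Bb$ is entirely free. First I would verify that the objective is in fact a function of the column space of $\Vb_{RF}$ alone: replacing $\Vb_{RF}$ by $\Vb_{RF}\Qb$ for any unitary $\Qb\in\Cs^{N_{RF}\times N_{RF}}$ merely conjugates $\Vb_{RF}^H\Hb_e\Vb_{RF}$, leaving $\log_{2}|\Ib+\frac{\gamma}{\sigma_n^2}\Vb_{RF}^H\Hb_e\Vb_{RF}|$ invariant. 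Thus the unconstrained problem amounts to selecting the best $N_{RF}$-dimensional subspace spanned by the columns of $\Vb_{RF}$, and any invertible $\Bb=\Cb$ attached afterwards yields the same optimum — which is precisely the origin of the arbitrary $\Cb$ in (\ref{eq:optiaml solution without hardware constraints}).

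Next I would diagonalize using $\Hb_e=\Vb_e\Lambdab_e\Vb_e^H$. Setting $\Ub\triangleq\Vb_e^H\Vb_{RF}$, which satisfies $\Ub^H\Ub=\Ib_{N_{RF}}$ since $\Vb_e$ is unitary, the objective becomes $\log_{2}|\Ib+\frac{\gamma}{\sigma_n^2}\Ub^H\Lambdab_e\Ub|$. Denoting the decreasingly ordered diagonal entries of $\Lambdab_e$ by $\lambda_1\ge\cdots\ge\lambda_{N_t}$ and the eigenvalues of the compressed matrix $\Ub^H\Lambdab_e\Ub$ by $\mu_1\ge\cdots\ge\mu_{N_{RF}}$, the determinant factors as $\prod_{i=1}^{N_{RF}}\bigl(1+\frac{\gamma}{\sigma_n^2}\mu_i\bigr)$, so maximizing the cost is equivalent to maximizing this product over all $\Ub$ with orthonormal columns.

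The crucial step is to bound the $\mu_i$. By the Poincar\'e separation (eigenvalue interlacing) theorem, the eigenvalues of the compression $\Ub^H\Lambdab_e\Ub$ of the Hermitian matrix $\Lambdab_e$ onto any $N_{RF}$-dimensional subspace satisfy $\mu_i\le\lambda_i$ for every $i=1,\dots,N_{RF}$. Since $\gamma/\sigma_n^2>0$ and the logarithm is increasing, this gives the $\Vb_{RF}$-independent upper bound $\sum_{i=1}^{N_{RF}}\log_{2}(1+\frac{\gamma}{\sigma_n^2}\mu_i)\le\sum_{i=1}^{N_{RF}}\log_{2}(1+\frac{\gamma}{\sigma_n^2}\lambda_i)$. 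I expect this interlacing inequality to be the main obstacle, as it is the only nontrivial matrix-analytic ingredient; if a self-contained argument is preferred, I would re-derive $\mu_i\le\lambda_i$ from the Courant--Fischer min--max characterization of eigenvalues.

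Finally I would verify that the bound is attained by $\Vb_{RF}^*=[\vb_{e,1},\dots,\vb_{e,N_{RF}}]$. For this choice $\Ub=\Vb_e^H\Vb_{RF}^*$ equals the first $N_{RF}$ columns of $\Ib_{N_t}$, whence $\Ub^H\Lambdab_e\Ub=\diag(\lambda_1,\dots,\lambda_{N_{RF}})$ and $\mu_i=\lambda_i$, so the upper bound holds with equality. Combining this optimal subspace with the unconstrained QR factor establishes $\Fb_{RF}^*=\Vb_{RF}^*\Cb$ for an arbitrary invertible $\Cb$, as claimed.
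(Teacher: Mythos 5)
Your proof is correct and follows essentially the same route as the paper: both hinge on eigenvalue interlacing (Poincar\'e separation) to obtain the bound $SE \leq \sum_{i=1}^{N_{RF}}\log_2\left(1+\frac{\gamma}{\sigma_n^2}\lambda_{\Hb_e,i}\right)$ and then verify equality at $\Vb_{RF}^*=[\vb_{e,1},\cdots,\vb_{e,N_{RF}}]$. The paper phrases the interlacing step via the leading principal submatrix of $\tilde{\Vb}^H\Hb_e\tilde{\Vb}$ for a unitary extension $\tilde{\Vb}$ of $\Vb_{RF}$, which is the same lemma you apply to the compression $\Ub^H\Lambdab_e\Ub$; your explicit justification of the arbitrary invertible factor $\Cb$ and of the equality case merely spells out details the paper leaves implicit.
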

\begin{proof}
  See in Appendix \ref{sec_proof}. \qedsymbol %\epr
\end{proof}

As a result, the constrained solution is obtained by projecting $\Fb_{RF}^*$ into the feasible space \cite{zhang2005variable}, i.e.,
\begin{equation}\label{eq:optimal solution to analog precoder}
  \Fb_{RF}(i,j)=\frac{1}{\sqrt{N_t}}e^{j\angle\Fb_{RF}^*(i,j)},\; \forall i,j,
\end{equation}
where $\angle x$ denotes the phase of the complex number $x$.
\subsection{Receiver HBF Design}
For a fixed analog combiner $\Wb_{RF}$, the optimal digital combiner of each subcarrier is the MMSE solution, i.e.,
 \begin{equation}\label{eq:optimal baseband combiner}
   \Wb_{BB}^*[k]=\left(\Jb_k\Jb^H_k+\sigma_n^2\Wb_{RF}^H\Wb_{RF}\right)^{-1}\Jb_k,
 \end{equation}
where $\Jb_k\triangleq \Wb_{RF}^H \Hb_k\Fb_k$. Then, the problem of analog combiner design can be formulated as
 \begin{align}\label{eq:analog combiner problem}
    \max\limits_{\Wb_{RF}} & \; \frac{1}{K}\sum_{k=1}^{K}\log_2\left|\Ib+\frac{1}{\sigma_n^2}\Wb_{RF}^{\dagger}\Tb_k\Wb_{RF}\right| \\
   \text { s.t. } & \left|\Wb_{RF}(i,j)\right|=1/\sqrt{N_r}, \quad \forall i, j, \notag
 \end{align}
 where $\Tb_k\triangleq \Hb_k\Fb_k\Fb^H_k\Hb^H_k$. Based on the QR decomposition, we can express $\Wb_{RF}=\Ub_{RF}\Db$, where $\Ub_{RF} \in \Cs^{N_{r}\times N_{RF}}$ has orthogonal column vectors, i.e., $\Ub_{RF}^H\Ub_{RF}=\Ib_{N_{RF}}$, and $\Db$ is an invertible matrix. By applying the Jensen's inequality, we can consider maximizing the upper bound of (\ref{eq:analog combiner problem}) for designing $\Wb_{RF}$ as,
 \begin{align}\label{eq:analog cpmbiner problem upper bound}
  \max\limits_{\Wb_{RF}} &\quad \log_{2}\left| \Ib + \frac{1}{\sigma_n^2}\Ub_{RF}^H \Tb_e \Ub_{RF} \right| \\
   \text { s.t. } & \left|\Wb_{RF}(i,j)\right|=1/\sqrt{N_r}, \quad \forall i, j, \notag
\end{align}
where $\Tb_e\triangleq\frac{1}{K} \sum_{k=1}^{K} \Tb_k$. The optimal solution to problem (\ref{eq:analog cpmbiner problem upper bound}) can be obtained similarly as that to (\ref{eq:analog precoder problem}). 

Finally, based on \eqref{eq:digital precoder}, \eqref{eq:optiaml solution without hardware constraints},\eqref{eq:optimal solution to analog precoder}, \eqref{eq:optimal baseband combiner}, and \eqref{eq:analog cpmbiner problem upper bound}, the proposed HBF scheme is summarized in Algorithm 1.

\begin{table}[htbp]
  \vspace{-0.5pt}
  \centering
  \begin{tabular}{cl} 
    \hline
    &{{\bf Algorithm 1}: The proposed HBF scheme for mmWave}\\
    &\qquad \qquad \quad \; {  MIMO-OFDM systems}  \\
    \hline%\hline
    &1. Input: $\Hb_k, \gamma, \sigma^2_n$\\
    &2. \quad $\Hb_e = \frac{1}{K} \sum_{k=1}^{K}\Hb_k^H \Hb_k$.\\
    &3. \quad Find $\Fb_{RF}$ according to \eqref{eq:optimal solution to analog precoder}. \\
    &4. \quad Calculate $\Fb_{BB}[k]$ according to \eqref{eq:digital precoder}.\\
    &5. \quad$\Tb_e=\frac{1}{K} \sum_{k=1}^{K}\Hb_k\Fb_{RF}\Fb_{BB}[k] (\Hb_k\Fb_{RF}\Fb_{BB}[k])^H$.\\
    &6. \quad Find $\Wb_{RF}$ by solving problem \eqref{eq:analog cpmbiner problem upper bound}.\\
    &7. \quad Calculate $\Wb_{BB}[k]$ according to \eqref{eq:optimal baseband combiner}.\\
    &8. Output: $\Fb_{RF}$, $\Wb_{RF}$, $\Fb_{BB}[k]$, $\Wb_{BB}[k]$.  \\
    \hline
  \end{tabular}
  %\vspace{-3pt}
\end{table}

\section{Results}
In this section, we present the numerical results to evaluate the performance and computational complexity of the proposed scheme. In the simulations, we use the channel model given in Section \ref{sc:channel model} with $N_{cl}=5$, $N_{ray}=10$, $\sigma_r=\sigma_t=10^\circ$, and antenna spacing distance $d=\frac{\lambda}{2}$. The SNR is defined as SNR$\triangleq P_b/\sigma_n^2$. All reported results are averaged over 100 channel realizations. In the simulations, we consider the hybrid beamforming in large-scale antenna arrays (HBF-LSAA) algorithm in \cite{sohrabi2017hybrid} and the low-complexity version of HBF-LSAA (HBF-LSAA-fast) algorithm in \cite{ku2021low} for comparison. The performance of optimal digital beamforming (DBF) via water-filling algorithm is also presented.
   \begin{figure}[htbp]
   \vspace{-0.3cm}
        \centering	
         \includegraphics[width=0.4\textwidth]{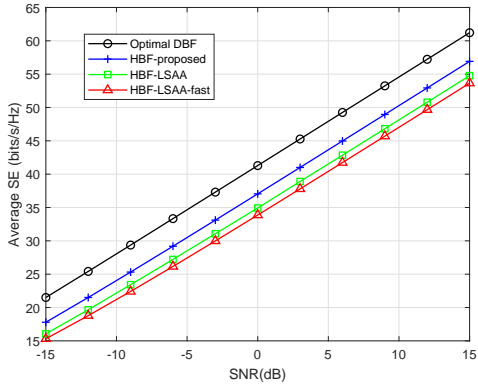}
         \captionsetup{font={small}}
        \caption{SE of considered HBF algorithms vs SNR with $N_t=N_r=64,N_s=N_{RF}=4,K=512$.}

        	\label{fig:SE of different vs SNR}
        	\vspace{-0.3cm}
  \end{figure}
   \begin{figure}[htbp]
   \vspace{-0.3cm}
        \centering	
         \includegraphics[width=0.4\textwidth]{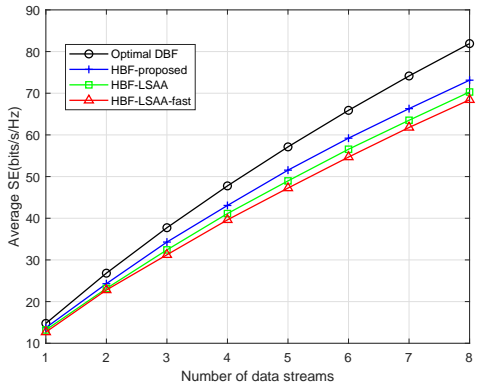}
         \captionsetup{font={small}}
        \caption{SE of considered HBF algorithms vs $N_s$ with $N_t=N_r=64,N_s=N_{RF},K=512$, SNR $= 5$ dB.}

        	\label{fig:SE of different HBF design vs Ns}
      \vspace{-0.3cm}
  \end{figure}
  
In Fig. \ref{fig:SE of different vs SNR}, we show the average SE of considered HBF algorithms versus the SNR of a OFDM system with $N_t=N_r=64,N_{RF}=N_s=4,K=512$. It is observed that the proposed method outperforms the compared algorithms over the entire considered SNR range. For the same SE, the SNR of the proposed design is required to be, on average, $1.5$dB lower than that of HBF-LSAA and $2.5$dB lower than that of HBF-LSAA-fast.

 In Fig. \ref{fig:SE of different HBF design vs Ns}, we show the average SE of the proposed HBF algorithm, the HBF-LSAA and HBF-LSAA-fast algorithms with respect to the number of data stream $N_s$ of a OFDM system with $N_t=N_r=64,K=512,N_{RF}=N_s$, and SNR$=5$dB. It is observed from this figure that the proposed HBF algorithm always outperforms the HBF-LSAA and HBF-LSAA-fast algorithms for the considered number of data streams.  

Next, we evaluate the computational complexity of the proposed HBF algorithm, the HBF-LSAA and HBF-LSAA-fast algorithms, which is computed as the number of floating-point operations (FLOPs). Because the main difference of the considered algorithms lies in the design of analog beamformers, we only show the complexities required to obtain analog beamformers. The computational complexity of the three algorithms are summarized in Table \ref{Tb:complexity}, where $N=\{N_r,N_t\}$. For the HBF-LSAA-fast algorithm, $N_{iter}$ is the number of iteration required to obtain the first dominant vector by the power method \cite{bentbib2015block}. Note that the computational complexity of the proposed method is mainly caused by performing the SVD. Since we only take the first $N_{RF}$ dominant singular vectors, the SVD here is actually the partial SVD which can be obtained at the complexity $\Oc(N_{RF}N^2)$ \cite{halko2011finding}. It is observed from Table \ref{Tb:complexity} that the proposed HBF algorithm is more computationally efficient compared with the HBF-LSAA and HBF-LSAA-fast algorithms. This is because the proposed HBF algorithm does not require iteration to obtain the analog beamformers.
\begin{table}[htbp]
%\vspace{-0.3cm}
    \centering
    \caption{Computational complexity of considered algorithms}\label{Tb:complexity}
        \begin{tabular}{|c|c|}
            \hline
            Algorithm &  Computational complexity\\
            \hline
            HBF-LSAA & $\Oc(N_{RF}N^3+N_{RF}^2N^2+N_{RF}^4)$\\
            \hline
            HBF-LSAA-fast & $\Oc(N_{iter}N_{RF}N^2)$\\
            \hline
            HBF-proposed & $\Oc(N_{RF}N^2)$ \\
            \hline
        \end{tabular}
        \vspace{-0.3cm}
\end{table}
     \begin{figure}[htbp]
     \vspace{-0.3cm}
        \centering	
         \includegraphics[width=0.4\textwidth]{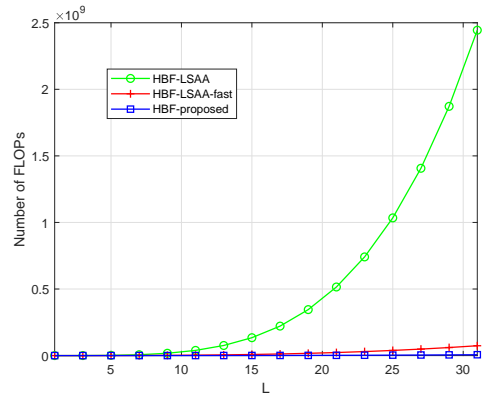}
         \captionsetup{font={small}}
        \caption{Computational complexity vs $L$ with $(N_r,Ns)=L*(8,1),N_r=N_t,N_s=N_{RF}$.}

        	\label{fig:Computational complexity}
        	\vspace{-0.3cm}
  \end{figure}
       \begin{figure}[htbp]
       \vspace{-0.3cm}
        \centering	
         \includegraphics[width=0.4\textwidth]{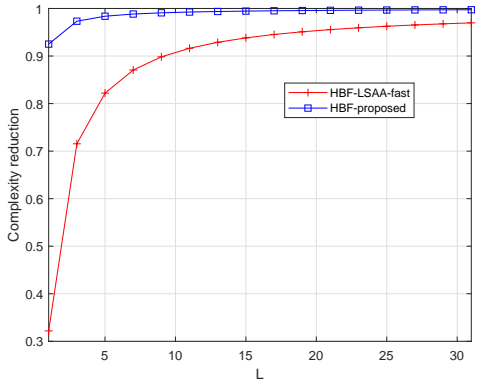}
         \captionsetup{font={small}}
        \caption{Computational reduction vs $L$ with $(N_r,Ns)=L*(8,1),N_r=N_t,N_s=N_{RF}$.}

        	\label{fig:Computational reduction}
        	\vspace{-0.3cm}
  \end{figure}

The results in Fig. \ref{fig:Computational complexity} show that the computational complexity of the three algorithms. The iteration number of HBF-LSAA-fast is $N_{iter}=10$ obtained by experiments. In the experiment, the number of transmitter and receiver antennas keeps the same, i.e., $N_r=N_t$, which is also the case for the data stream and RF chains, i.e., $N_{RF}=N_s$. The ratio between antennas and data streams is fixed. The network size is scaled by the coefficient $L$, which is the X-axis of Fig. \ref{fig:Computational complexity}. The initial system parameters are $N_r=N_t=8,N_s=N_{RF}=1$. The complexity reduction of the HBF-proposed and the HBF-LSAA-fast normalized by the HBF-LSAA is shown in Fig. \ref{fig:Computational reduction}. The results show that the proposed method has much lower computational complexity compared with the other two HBF algorithms. Specifically, the HBF-proposed can reduce over $99.7\%$ computational complexity compared with the HBF-LSAA at $L=31$. Moreover, the proposed method can achieve $99.08\%$ complexity reduction at $L=9$ while the HBF-LSAA-fast only achieves $89.83\%$ complexity reduction.

\section{Conclusion}
In this paper, we propose a low-complexity algorithm for HBF design in a mmWave MIMO-OFDM system. We obtain closed-form solutions to the analog and digital beamformers that maximize the upper bound of the SE. The numerical results show that the proposed method achieves state-of-art performance while requiring very low complexity. Therefore, the proposed scheme is feasible and practical for the implementation of mmWave massive MIMO systems. For future researches, the low-complexity HBF design in multi-cell and multi-user scenarios with imperfect CSI can be considered. 

% \section*{Acknowledgment}
% This work has been supported in part by Academy of Finland under 6Genesis Flagship (grant 318927) and EERA Project (grant 332362).

% conference papers do not normally have an appendix

% use section* for acknowledgment
%\section*{Acknowledgment}

\appendices
\section{Proof of Theorem \ref{tr:optimal solution}}
\label{sec_proof}
 The objective function in (\ref{eq:analog precoder problem}) can be written as
  \begin{align}\label{eq:equivalent RF beamformer without constraints upper bound}
   SE&=\log_2\left|\Ib_{N_{RF}}+\frac{\gamma}{\sigma_n^2}\Vb_{RF}^H\Hb_e\Vb_{RF}\right| \notag\\
        &=\sum_{i=1}^{N_{RF}}\log_2\left(1+\frac{\gamma}{\sigma_n^2}\lambda_{\Vb_{RF}^H\Hb_e\Vb_{RF},i} \right),
 \end{align}
 where $\lambda_{\Vb_{RF}^H\Hb_e\Vb_{RF},i}$ denotes the $i$th eigenvalue of $\Vb_{RF}^H\Hb_e\Vb_{RF}$. The matrix $\Vb_{RF}^H\Hb_e\Vb_{RF} \in \Cs^{N_{RF}\times N_{RF}}$ is a Hermitian matrix, and it is also the leading $N_{RF}\times N_{RF}$ principal submatrix of the $N_t \times N_t$ Hermintian matrix $\tilde{\Vb}^H\Hb_e\tilde{\Vb}$, where $\tilde{\Vb}$ is $N_t \times N_t$ unitary and expanded from $\Vb_{RF}$. Then, according to the interlacing property of the eigenvalues for Hermitian matrices, it can be shown that \cite{zhang2005variable}
 \begin{equation}\label{eq:interlacing property}
   \lambda_{\Vb_{RF}^H\Hb_e\Vb_{RF},i} \leq \lambda_{\tilde{\Vb}^H\Hb_e\tilde{\Vb},i}=\lambda_{\Hb_e,i}, 1\leq i \leq N_{RF}.
 \end{equation}
 Combined With \eqref{eq:equivalent RF beamformer without constraints upper bound} and  (\ref{eq:interlacing property}), we can prove that
 \begin{equation}\label{eq:upper bound by interlacing property}
   SE \leq \sum_{i=1}^{N_{RF}}\log_2\left(1+\frac{\gamma}{\sigma_n^2}\lambda_{\Hb_e,i} \right).
 \end{equation}
The equality holds when $\Vb_{RF}=[\vb_{e,1},\cdots,\vb_{e,N_{RF}}]$, where $\vb_{e,i}$ is the $i$th column of $\Vb_e$ which is given by $\Hb_e=\Vb_e\Lambdab_e\Vb_e^H$.

\bibliographystyle{IEEEtran}
\bibliography{conf_short,jour_short,PS_HBF}

% that's all folks
\end{document}